\documentclass[final,3p]{elsarticle}
\usepackage{amsmath}
\usepackage{amsthm}
\usepackage{amsfonts}
\usepackage{amssymb}
\usepackage{epstopdf}
\usepackage{mathrsfs}
\usepackage{graphicx}
\usepackage{cite}
\DeclareMathOperator*{\argmin}{arg\,min}

\newtheorem{definition}{Definition}
\newtheorem{lemma}{Lemma}
\newtheorem{theorem}{Theorem}

\newcommand{\norm}[1]{\left|\left|#1\right|\right|_2}

%\title{On the Number of Iterations for Convergence of CoSaMP and Subspace Pursuit Algorithms}

%\author{
%Siddhartha Satpathi and Mrityunjoy
%Chakraborty ~\IEEEmembership{Senior Member,~IEEE}
%\thanks{The authors are with the Department of Electronics and Electrical Communication Engineering, Indian Institute of Technology, Kharagpur, INDIA
%(e.mail : sidd.piku@gmail.com; mrityun@ece.iitkgp.ernet.in).}
%}

%\journal{Elsevier Applied and Computational Harmonic Analysis}

\begin{document}

\begin{frontmatter}

%% Title, authors and addresses

%% use the tnoteref command within \title for footnotes;
%% use the tnotetext command for theassociated footnote;
%% use the fnref command within \author or \address for footnotes;
%% use the fntext command for theassociated footnote;
%% use the corref command within \author for corresponding author footnotes;
%% use the cortext command for theassociated footnote;
%% use the ead command for the email address,
%% and the form \ead[url] for the home page:
%% \title{Title\tnoteref{label1}}
%% \tnotetext[label1]{}
%% \author{Name\corref{cor1}\fnref{label2}}
%% \ead{email address}
%% \ead[url]{home page}
%% \fntext[label2]{}
%% \cortext[cor1]{}
%% \address{Address\fnref{label3}}
%% \fntext[label3]{}

\title{On the Number of Iterations for Convergence of CoSaMP and Subspace Pursuit Algorithms}

%% use optional labels to link authors explicitly to addresses:
%% \author[label1,label2]{}
%% \address[label1]{}
%% \address[label2]{}

\author[a]{Siddhartha Satpathi}
\author[b]{Mrityunjoy
Chakraborty}

\address{Department of Electronics and Electrical Communication Engineering, Indian Institute of Technology, Kharagpur, INDIA}
\address[a]{sidd.piku@gmail.com}
\address[b]{mrityun@ece.iitkgp.ernet.in}

\begin{abstract}
In compressive sensing, one important parameter that characterizes
the various greedy recovery algorithms is the iteration bound
which provides the maximum number of iterations by which the
algorithm is guaranteed to converge. In this letter, we present a
new iteration bound for CoSaMP by certain mathematical
manipulations including formulation of appropriate sufficient
conditions that ensure passage of a chosen support through the two
selection stages of CoSaMP, ``Augment" and ``Update".
Subsequently, we extend the treatment to the subspace pursuit (SP)
algorithm. The proposed  iteration bounds for both CoSaMP and SP
algorithms are seen to be improvements over their existing
counterparts, revealing that both CoSaMP and SP algorithms
converge in fewer iterations than suggested by results available
in literature.
\end{abstract}

\begin{keyword}
%% keywords here, in the form: keyword \sep keyword
compressive sensing \sep CoSaMP \sep subspace pursuit \sep restricted isometry property \sep support set.
\end{keyword}

\end{frontmatter}

\section{Introduction}

Reconstruction of signals in compressed sensing (CS)
\citep{Donoho} involves obtaining the sparsest solution to an
underdetermined set of equations given as ${\bf y}={\bf A}{\bf
x},$ where ${\bf A}$ is an $m\times l$ ($m<<l$) complex valued,
sensing matrix and $y$ is an $m\times 1$ complex valued
observation vector. It is assumed that the sparsest solution to
the above system is $K$-sparse, i.e., not more than $K$ (for some
minimum $K,K>0$) elements of ${\bf x}$ are non-zero and also that
the sparsest solution is unique, which can be guaranteed if every
$2K$ columns of $\bf A$ are linearly independent \citep{Tropp_2K}.
Greedy approaches like orthogonal matching pursuit (OMP)
\citep{OMP_Tropp}, compressive sampling matching pursuit (CoSaMP)
\citep{Needell}, subspace pursuit (SP) \citep{Dai}, hard
thresholding pursuit (HTP) \citep{HTP} and others recover the
$K$-sparse signal by iteratively constructing the support set of
the sparse signal (i.e., index of non-zero elements in the sparse
vector) by some greedy principles. These greedy pursuits are well
known for their low complexity.

Convergence of these iterative procedures in a finite number of
steps requires the matrix $\bf A$ to satisfy the so-called
``Restricted Isometry Property (RIP)" \citep{Elad} of appropriate
order as given below.
\begin{definition}
 A matrix ${\bf A}\in \mathbb{C}^{m\times l}\;(m<l)$ is said to satisfy the RIP of order $K$
if there exists a ``Restricted Isometry Constant (RIC)"
$\delta_K\in (0,\;1)$ so that
\begin{equation}
(1-\delta_K)\norm{{\bf x}}^2\leqslant \norm{{\bf A x}}^2\leqslant(1+\delta_K)\norm{{\bf x}}^2
\end{equation}
for all $K$-sparse ${\bf x}$. The constant $\delta_K$ is taken as
the smallest number from $(0,\;1)$ for which the RIP is satisfied.
\end{definition}

%It is easy to see that if ${\bf A}$ satisfies RIP of order $K$,
%then it also satisfies RIP for any order $L$ where $L<K$ and that
%$\delta_K\ge \delta_L$. Simple choice of a random matrix for ${\bf
%A}$ can make it satisfy the RIP condition with high probability
%\citep{Elad}.

Convergence of the CS greedy recovery algorithms is usually
established by imposing certain upper bounds on the RIC $\delta_K$
as a sufficient condition. In the case of CoSaMP, such a bound is
given by $\delta_{4K} < 0.5$ \citep{Song}, which is a refined
(i.e., more relaxed) version of two earlier bounds, namely,
$\delta_{4K} < 0.17157$ \citep{Needell} and $ \delta_{4K} <
0.38427$ \citep{FoucartCoSa}. Similarly, for SP, the bound
proposed originally is $ \delta_{3K} < 0.205$ \citep{Dai}, which
was improved afterwards to $ \delta_{3K} < 0.325$ \citep{SP2} and
$ \delta_{3K} < 0.4859$ \citep{Song}.

Apart from the convergence bound on the RIC, there is another
important parameter that characterizes a greedy algorithm, namely,
the iteration bound, which provides the maximum (finite) number of
iterations by which the algorithm is guaranteed to converge. For
CoSaMP, a signal independent iteration bound of $6(K+1)$ was
presented in \citep{Needell}, assuming $\delta <0.1$. In this
letter, we present a new iteration bound for CoSaMP which refines
the above result and is given as \textit{a function of  $\delta$
over the entire range for which convergence of CoSaMP is currently
guaranteed  (i.e., $0<\delta<0.5$)}.
 %For CoSaMP,
%such a bound was computed in \citep{Needell} which is, however,
%dependent on the elements of the solution ${\bf x}$, and to the
%best of our knowledge, no iteration bound exists for CoSaMP which
%is \textit{independent of signal structure}. In this letter, we
%present a signal independent, universal iteration bound for
%CoSaMP.
For this, we first develop a sufficient condition for capturing
the support of the $p+q$ largest (in magnitude) elements of $\bf
x$ within certain number of iterations ($0<p+q\le K$), given that
the support for the $p$ largest elements of $\bf x$ has already
been captured. The derivation takes appropriate steps so that the
above sufficient condition is obtained in a form structurally
similar to the one proposed earlier for the HTP algorithm
\citep{FoucartHTP}. This permits computation of the iteration
bound via a procedure suggested in \citep{FoucartHTP}.
Subsequently, we extend our approach to the SP algorithm and
compute the corresponding iteration bound which is seen to be
tighter than existing results on this \citep{Dai} for more
practical ranges of the RIC and is thus an improvement, as it
establishes that the SP algorithm in more practical cases
converges in fewer iterations than suggested in \citep{Dai}.
\section{Notations and a brief review of the CoSaMP \& the SP algorithms}

%notation and brief review of cosamp algorithm
We denote by $Z$ the index set $\{1,\,2,\cdots,\,l\}$. Then, given
$B\subseteq Z$ and ${\bf z}\in \mathbb{C}^l$, the vector ${\bf
z}_B\in\mathbb{C}^l$ is defined as follows : $[{\bf z}_B]_i=[{\bf
z}]_i$ for $i\in B$ and $[{\bf z}_B]_i=0$ otherwise. Similarly,
given the matrix ${\bf A}\in \mathbb{C}^{m\times l}$, the matrix
${\bf A}_{B}\in\mathbb{C}^{m\times l}$ is defined such that for
$i\in B$, $[{\bf A}_{B}]_i=[{\bf A}]_i$ (where $[.]_i$ denotes the
$i$-th column of the matrix $[.]$) and $[{\bf A}_{B}]_i={\bf 0}$
otherwise. The notation $supp(.)$ denotes the support of the
vector $(.)$, i.e., $supp({\bf z})=\{i\in Z\;|\;z_i\ne 0\}$. By
$S$ and $S^k$, we denote respectively the true support set of
${\bf x}$ and the estimated support set after $k$ iterations.
Elements of the vector $|{\bf x}|$ sorted in descending order form
the vector $\widetilde{{\bf x}}$ and the $r$-th element of
$\widetilde{{\bf x}},~r=1,\,2,\cdots,l$ is denoted by
$\widetilde{x}_r$, i.e., $\widetilde{x}_r=[\widetilde{{\bf
x}}]_r$. The index of the $i^{th}$ largest (in magnitude) element
of ${\bf x}$, $i=1,2,\cdots,|S|$ is denoted by $\pi(i)$, implying
$|{\bf x}_{\pi(i)}|=\widetilde{x}_i$. Lastly, for a matrix $\bf
A$, ${\bf A}^h$ denotes its Hermitian transposition.
%henever we mention the top $p$
%(for any integer $p$) largest elements of
   %Lastly, we use ``$t$"
%in the superscript to denote matrix / vector transposition and by
%``$span(.)$", we denote the subspace spanned by the columns of the
%matrix ``$.$".

For convenience of presentation, we adopt the convention of using
the notation :
% $\stackrel{L1}{=}$ or
$\stackrel{(.)}{=}$
%or $\stackrel{D1}{=}$ or $\stackrel{T1}{=}$ to
to indicate that the equality ``$=$" follows from
%Lemma 1 /
Equation (.)
%/ Definition 1 / Theorem 1 respectively
(same for inequalities). Also, unless stated otherwise, the more
generalized form of CS will be considered in this paper where $\bf
x$ is $K$-sparse but $\bf y$ is contaminated with a noise vector
$\bf e\in \mathbb{C}^{m}$, i.e., ${\bf y}={\bf A}{\bf x}+{\bf e}$.
%, where $\bf e$ is an additive noise
%vector.
%\footnote{Convergence of
%the greedy algorithm \citep{Needell}-\citep{Dai} in this case is
%described by $||{\bf x}-{\bf{\hat x}}||^2\le C_K||{\bf e}||^2$,
%where $C_K$ is a constant and ${\bf{\hat x}}$ is the reconstructed
%estimate of ${\bf x}$; the convergence is, however, established by
%the same RIP condition (1) on $\bf A$.}
\begin{table}[h]
%\begin{minipage}[b]{8cm}
\caption {Compressive Sampling Matching Pursuit algorithm}
\begin{tabular}{p{15cm}}
\hline \textbf{Input}: measurement ${\bf y}\in\mathbb{C}^m$,
sensing matrix ${\bf A}\in\mathbb{C}^{m\times l}$,
 sparsity $K$, stopping error $\epsilon$, initial estimate ${\bf x^0}$\\
\hline\\
\textbf{For} ( $n=1$ ; $|| {\bf y}-{\bf A}{\bf x^{n-1}}||_2>\epsilon$ ; $n=n+1$ )\\
\hspace{7mm}\textit{\textbf{Identification}}: $\Gamma^{n}=supp(H_{2K}({\bf A}^h({\bf y}-{\bf A}{\bf x^{n-1}})))$\\
\hspace{7mm}\textit{\textbf{Augment}}: $U^{n}=S^{n-1} \cup \Gamma^{n}:S^{n-1}=supp({\bf x^{n-1}})$\\
\hspace{7mm}\textit{\textbf{Estimate}}: ${\bf
u^{n}}=\underset{{\bf z}:supp({\bf
z})=U^{n}}\argmin \;||{\bf y}-{\bf A}{\bf z}||_2$\\
\hspace{7mm}\textit{\textbf{Update}}: ${\bf x^{n}}=H_K({\bf u^{n}})$\\
%\textbf{End For}\\
\textbf{Output}: ${\bf \hat {x}}={\bf x^{n-1}}$
\\
\hline \label{CoSaMP}
\end{tabular}
%\end{minipage}
\end{table}

The CoSaMP and the SP algorithms are given in Table \ref{CoSaMP}
and \ref{SP} respectively. Both algorithms iteratively estimate
$\bf x$, with ${\bf x^{n}}$ denoting the estimate at the $n$-th
iteration. At the ``Identification" stage, in both algorithms, the
residue vector $({\bf y}-{\bf A}{\bf x^{n-1}})$ is first
correlated with the columns of $\bf A$. The support $\Gamma^{n}$
of the top $r$ elements ($r=2K$ in case of CoSaMP and $r=K$ in
case of SP) in terms of magnitude of correlations is then
identified, using a hard thresholding operator $H_{r}(.)$ that
retains the top $r$ (in magnitude) elements of the vector $(.)$
and sets other elements to zero. The vector $\bf y$ is then
projected orthogonally on the column space of ${\bf A}_{S^{n-1}
\cup \Gamma^{n}}$ where $S^{n-1}=supp({\bf x^{n-1}})$, generating
the projection coefficient vector ${\bf u^{n}}$. In CoSaMP, the
new estimate ${\bf x^{n}}$ is taken as $H_K({\bf u^{n}})$, whereas
in SP, $\bf y$ is further projected on the column space of ${\bf
A}_{S^{n}}$, where $S^n=supp(H_K({\bf u^{n}}))$, and ${\bf x^{n}}$
is taken as the corresponding projection coefficient vector.

\begin{table}
%\begin{minipage}[b]{8cm}
\caption {Subspace Pursuit Algorithm}
\begin{tabular}{p{15cm}}
\hline \textbf{Input}: measurement ${\bf y}\in\mathbb{C}^m$,
sensing matrix ${\bf A}\in\mathbb{C}^{m\times l}$,
 sparsity $K$, stopping error $\epsilon$, initial estimate ${\bf x^0}$\\
\hline\\
\textbf{For} ( $n=1$ ; $|| {\bf y}-{\bf A}{\bf x^{n-1}}||_2>\epsilon$ ; $n=n+1$ )\\
\hspace{7mm}\textit{\textbf{Identification}}: $\Gamma^{n}=supp(H_{K}({\bf A}^h({\bf y}-{\bf A}{\bf x^{n-1}})))$\\
\hspace{7mm}\textit{\textbf{Augment}}: $U^{n}=S^{n-1} \cup \Gamma^{n}:S^{n-1}=supp({\bf x^{n-1}})$\\
\hspace{7mm}\textit{\textbf{Estimate}}: ${\bf
u^{n}}=\underset{{\bf z}:supp({\bf
z})=U^{n}}\argmin \;||{\bf y}-{\bf A}{\bf z}||_2$\\
\hspace{7mm}\textit{\textbf{Update}}: $S^n=supp(H_K({\bf u^{n}}))$\\
\hspace{17mm}${\bf x^{n}}=\underset{{\bf z}:supp({\bf z})=S^{n}}\argmin \;||{\bf y}-{\bf A}{\bf z}||_2$\\
%\textbf{End For}\\
\textbf{Output}: ${\bf \hat {x}}={\bf x^{n-1}}$
\\
\hline \label{SP}
\end{tabular}
%\end{minipage}
\end{table}
\section{Proposed Iteration Bound Analysis for CoSaMP and SP Algorithms}
\subsection{Iteration Bound for CoSaMP}
%\textit{Iteration Bound for CoSaMP: }
The proposed iteration bound computation for CoSaMP depends on the
dynamics of decay of $||{\bf x}_{\overline{U^n}}||_2$ over $n$
(under appropriate conditions on the RIC), which is presented in
Lemma 1 below and is obtained by introducing suitable
modifications in the corresponding analysis in \citep{Song}, which
considers decay of $||{\bf x}-{\bf x}^n||_2$ (rather than $||{\bf
x}_{\overline{U^n}}||_2$) over $n$.
\begin{lemma}
In CoSaMP algorithm, the metric $||{\bf x}_{\overline{U^n}}||_2$
decays over $n$ with the rate
$\rho_{4K}=\sqrt{\frac{2\delta^2_{4K}(1+2\delta^2_{4K})}{1-\delta^2_{4K}}}$
as per the following :
\begin{equation}
||{\bf
x}_{\overline{U^n}}||_2<\rho_{4K}||{\bf
x}_{\overline{U^{n-1}}}||_2+(1-\rho_{4K})\tau\norm{\textbf{e}}
\label{metric1}
\end{equation}
where the constant $\tau$ is defined using $\rho_{4K}$ as,
$(1-\rho_{4K})\tau
=\frac{\delta_{4K}\sqrt{6(1+\delta_{3K})}}{1-\delta_{4K}}+\sqrt{2(1+\delta_{4K})}$.
\end{lemma}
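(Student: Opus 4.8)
The plan is to obtain the stated one‑step contraction by composing two inequalities. Writing the current residual as $\mathbf{r}^{n-1}=\mathbf{y}-\mathbf{A}\mathbf{x^{n-1}}=\mathbf{A}(\mathbf{x}-\mathbf{x^{n-1}})+\mathbf{e}$, I would first bound $\norm{\mathbf{x}_{\overline{U^n}}}$ in terms of $\norm{\mathbf{x}-\mathbf{x^{n-1}}}$ and $\norm{\mathbf{e}}$ using only the \emph{Identification} step of the current iteration, and then bound $\norm{\mathbf{x}-\mathbf{x^{n-1}}}$ in terms of $\norm{\mathbf{x}_{\overline{U^{n-1}}}}$ and $\norm{\mathbf{e}}$ using the \emph{Estimate}/\emph{Update} steps that produced $\mathbf{x^{n-1}}$. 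Substituting the second bound into the first and collecting the $\delta$'s (weakening $\delta_{2K}\le\delta_{3K}\le\delta_{4K}$ wherever possible) is what yields $\rho_{4K}$ and $(1-\rho_{4K})\tau$ in the stated closed forms.

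\emph{Identification step.} Set $\Gamma=S\cup S^{n-1}$; since $\mathbf{x^{n-1}}=H_K(\mathbf{u^{n-1}})$ we have $|S^{n-1}|\le K$, hence $|\Gamma|\le 2K$ and $\mathrm{supp}(\mathbf{x}-\mathbf{x^{n-1}})\subseteq\Gamma$. Because $h^n$ gathers the $2K$ largest‑magnitude coordinates of $\mathbf{A}^t\mathbf{r}^{n-1}$ and $|\Gamma|\le 2K$, one gets $\norm{\mathbf{A}^t_{h^n\setminus\Gamma}\mathbf{r}^{n-1}}\ge\norm{\mathbf{A}^t_{\Gamma\setminus h^n}\mathbf{r}^{n-1}}\ge\norm{\mathbf{A}^t_{S\setminus U^n}\mathbf{r}^{n-1}}$, the last step using $S\setminus U^n\subseteq\Gamma\setminus h^n$ together with $\mathbf{x}_{\overline{U^n}}=\mathbf{x}_{S\setminus U^n}$. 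On the left, $h^n\setminus\Gamma$ is disjoint from $\Gamma$ and $|h^n\setminus\Gamma|+|\Gamma|\le 4K$, so the RIP (Definition 1 and its standard near‑orthogonality consequence) bounds $\norm{\mathbf{A}^t_{h^n\setminus\Gamma}\mathbf{r}^{n-1}}$ above by a combination of $\delta_{4K}\norm{\mathbf{x}-\mathbf{x^{n-1}}}$ and $\sqrt{1+\delta_{4K}}\,\norm{\mathbf{e}}$; on the right, expanding $\mathbf{r}^{n-1}$ on $D:=S\setminus U^n$ (which is disjoint from $S^{n-1}$, so $(\mathbf{x}-\mathbf{x^{n-1}})_D=\mathbf{x}_D$) and dividing through by $\norm{\mathbf{x}_D}$ bounds $\norm{\mathbf{A}^t_{S\setminus U^n}\mathbf{r}^{n-1}}$ below by $(1-\delta_{4K})\norm{\mathbf{x}_{\overline{U^n}}}$ minus analogous $\delta_{4K}\norm{\mathbf{x}-\mathbf{x^{n-1}}}$ and $\sqrt{1+\delta_{4K}}\,\norm{\mathbf{e}}$ terms. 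The sharpening that makes the constants come out exactly is to keep $\norm{\mathbf{x}-\mathbf{x^{n-1}}}^2=\norm{\mathbf{x}_{\overline{U^n}}}^2+\norm{(\mathbf{x}-\mathbf{x^{n-1}})_{\Gamma\setminus D}}^2$ as a Pythagorean split instead of using it bluntly; this is where the $\sqrt{1-\delta_{4K}^2}=\sqrt{(1-\delta_{4K})(1+\delta_{4K})}$ in the denominator of $\rho_{4K}$ originates. The outcome is a bound $\norm{\mathbf{x}_{\overline{U^n}}}\le c_1\norm{\mathbf{x}-\mathbf{x^{n-1}}}+c_2\norm{\mathbf{e}}$ with $c_1$ proportional to $\delta_{4K}/\sqrt{1-\delta_{4K}^2}$.

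\emph{Estimate/Update step.} Since $\mathbf{u^{n-1}}$ is the least‑squares fit on $U^{n-1}$ with $|U^{n-1}|\le 3K$, we have $\mathbf{u^{n-1}}_{U^{n-1}}-\mathbf{x}_{U^{n-1}}=\mathbf{A}_{U^{n-1}}^{\dagger}\mathbf{A}_{\overline{U^{n-1}}}\mathbf{x}_{\overline{U^{n-1}}}+\mathbf{A}_{U^{n-1}}^{\dagger}\mathbf{e}$; as $\mathbf{x}_{\overline{U^{n-1}}}$ sits on $S\setminus U^{n-1}$ ($\le K$ indices, disjoint from $U^{n-1}$, total $\le 4K$) while $\sigma_{\min}^2(\mathbf{A}_{U^{n-1}})\ge 1-\delta_{3K}$, this quantity is controlled by $\delta_{4K}\norm{\mathbf{x}_{\overline{U^{n-1}}}}$ and $\sqrt{1+\delta_{3K}}\,\norm{\mathbf{e}}$ over $1-\delta_{3K}$. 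The \emph{Update} $\mathbf{x^{n-1}}=H_K(\mathbf{u^{n-1}})$ alters the iterate only on $U^{n-1}$, where $\mathbf{x}_{U^{n-1}}$ is $K$‑sparse, so best‑$K$‑term optimality of $H_K$ passes from $\norm{\mathbf{u^{n-1}}_{U^{n-1}}-\mathbf{x}_{U^{n-1}}}$ to $\norm{\mathbf{x}_{U^{n-1}}-\mathbf{x^{n-1}}}$ at a bounded cost; combined with $\norm{\mathbf{x}-\mathbf{x^{n-1}}}^2=\norm{\mathbf{x}_{U^{n-1}}-\mathbf{x^{n-1}}}^2+\norm{\mathbf{x}_{\overline{U^{n-1}}}}^2$ this gives $\norm{\mathbf{x}-\mathbf{x^{n-1}}}^2\le(1+2\delta_{4K}^2)\norm{\mathbf{x}_{\overline{U^{n-1}}}}^2+(\text{noise})$, the factor $1+2\delta_{4K}^2$ being exactly the one inside $\rho_{4K}$. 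Feeding this into the identification bound, $c_1\norm{\mathbf{x}-\mathbf{x^{n-1}}}$ becomes $c_1\sqrt{1+2\delta_{4K}^2}\,\norm{\mathbf{x}_{\overline{U^{n-1}}}}=\rho_{4K}\norm{\mathbf{x}_{\overline{U^{n-1}}}}$, and the two noise contributions — the $\sqrt{1+\delta_{4K}}$ terms from identification and the $\delta_{4K}\sqrt{1+\delta_{3K}}/(1-\delta_{3K})$‑type terms from the least‑squares fit — add up to $(1-\rho_{4K})\tau\norm{\mathbf{e}}$ with $(1-\rho_{4K})\tau=\frac{\delta_{4K}\sqrt{6(1+\delta_{3K})}}{1-\delta_{4K}}+\sqrt{2(1+\delta_{4K})}$.

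I expect the main obstacle to be the constant bookkeeping rather than the structure. Three things must be tracked at once: every RIP order must stay at or below $4K$ (it is the union of $h^n\setminus\Gamma$, of size $\le 2K$, with $\mathrm{supp}(\mathbf{x}-\mathbf{x^{n-1}})\subseteq S\cup S^{n-1}$, of size $\le 2K$, that forces exactly $\delta_{4K}$, whereas the normal‑equations inverse only sees $|U^{n-1}|\le 3K$, giving $\delta_{3K}$); the spurious $\norm{\mathbf{x}-\mathbf{x^{n-1}}}$ left on the "good'' side by the Cauchy--Schwarz step must be eliminated without circularity, which is handled by first isolating $\norm{\mathbf{x}_{\overline{U^n}}}$ and only then inserting the Estimate/Update bound; and the Pythagorean splittings of $\norm{\mathbf{x}-\mathbf{x^{n-1}}}^2$ on both $U^n$ and $U^{n-1}$ must be carried through the algebra so that the bluntly combined constants ($\sim\!2\delta$, $\sim\!1-\delta$) tighten to the quadratic‑form constants ($\sqrt{2\delta^2/(1-\delta^2)}$, etc.). This is precisely the computation performed for CoSaMP in \cite{Song}, which the lemma repackages in the $\norm{\mathbf{x}_{\overline{U^n}}}$ variable needed for the subsequent iteration‑count theorem.
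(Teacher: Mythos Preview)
Your two-step decomposition --- first bounding $\norm{\mathbf{x}_{\overline{U^n}}}$ by $\norm{\mathbf{x}-\mathbf{x^{n-1}}}$ via the Identification step, then bounding $\norm{\mathbf{x}-\mathbf{x^{n-1}}}$ by $\norm{\mathbf{x}_{\overline{U^{n-1}}}}$ via the Estimate/Update steps, and composing --- is exactly the route the paper takes. The structure is right.

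Where you go astray is in which step produces which factor. In the paper the Identification step yields only $c_1=\sqrt{2}\,\delta_{4K}$ (no $\sqrt{1-\delta_{4K}^2}$); the factor $1/\sqrt{1-\delta_{4K}^2}$ comes from the \emph{Estimation} step, and not via the pseudoinverse bound you suggest but via a self-bounding inner-product argument: writing $\norm{(\mathbf{u}^{n-1}-\mathbf{x})_{U^{n-1}}}^2=\langle \mathbf{u}^{n-1}-\mathbf{x},(I-\mathbf{A}^t\mathbf{A})(\mathbf{u}^{n-1}-\mathbf{x})_{U^{n-1}}\rangle+\langle \mathbf{e},\mathbf{A}(\mathbf{u}^{n-1}-\mathbf{x})_{U^{n-1}}\rangle$ and using RIP gives $\norm{(\mathbf{u}^{n-1}-\mathbf{x})_{U^{n-1}}}\le \frac{\delta_{4K}}{\sqrt{1-\delta_{4K}^2}}\norm{\mathbf{x}_{\overline{U^{n-1}}}}+\tau_1\norm{\mathbf{e}}$ with $\tau_1=\sqrt{1+\delta_{3K}}/(1-\delta_{4K})$. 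Your pseudoinverse route would instead produce $\delta_{4K}/(1-\delta_{3K})$, which does not recombine to the stated $\rho_{4K}$. Combining this with the Update bound $\norm{(\mathbf{x})_{U^{n-1}\setminus S^{n-1}}}\le\sqrt{2}\norm{(\mathbf{u}^{n-1}-\mathbf{x})_{U^{n-1}}}$ and Pythagoras gives $\norm{\mathbf{x}-\mathbf{x^{n-1}}}^2\le \frac{1+2\delta_{4K}^2}{1-\delta_{4K}^2}\norm{\mathbf{x}_{\overline{U^{n-1}}}}^2+\ldots$, not the $(1+2\delta_{4K}^2)$ without denominator that you claim. So the product of the two steps is $\sqrt{2}\,\delta_{4K}\cdot\sqrt{(1+2\delta_{4K}^2)/(1-\delta_{4K}^2)}=\rho_{4K}$, and the noise terms $\sqrt{2}\delta_{4K}\cdot\sqrt{3}\,\tau_1+\sqrt{2(1+\delta_{4K})}$ give exactly the stated $(1-\rho_{4K})\tau$. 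Swap the location of the $\sqrt{1-\delta_{4K}^2}$ in your sketch and replace the pseudoinverse estimate by the inner-product argument above, and the constants fall out as claimed.
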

\begin{proof}
Given in Appendix A.
\end{proof}
%Lemma 1 implies that convergence of the
%CoSaMP algorithm would require $\rho_{4K}<1$ which in turn leads to the
%following upper bound on the RIC :
%$\delta_{4K}<0.5$ \citep{Song}.
Using the above Lemma, we next derive a sufficient condition for
capturing the support of, say, the $p+q$ largest elements of $\bf
x$ in $k$ iterations ($0<p+q\le K$), assuming that the support of
the $p$ largest elements of $\bf x$ has already been captured
(where by ``largest", we mean largest in magnitude). In
particular, we strive to obtain the above sufficient condition in
a form that is structurally identical to the one developed for the
HTP algorithm in Lemma 3 of \citep{FoucartHTP}, so that the
procedure to compute the iteration bound as presented in
\citep{FoucartHTP} can be applied. This is, however, not easy
considering that algorithmically, CoSaMP is substantially
different from the HTP algorithm and is in particular
characterized by certain steps like ``Augment" (i.e., expansion of
the support set to size $3K$, as given in Table I) and ``Update"
(i.e., pruning the support set to size $K$ from $3K$) not present
in HTP.
 In Theorem 1 below, we show how the above can be achieved by deploying
 suitable mathematical manipulations, and in particular, by formulating appropriate sufficient
conditions that ensure that the support of the $p+q$ largest
elements of $\bf x$ gets selected in both the ``Augment" step and
the subsequent ``Update" step.

%\label{mainLemma}
\begin{theorem}
 Assume that at the $n$-th iteration in
the CoSaMP algorithm, $S^n:supp({\bf x^n})$ contains the support
of the $p~(p<K)$ largest (in magnitude) entries of ${\bf x}$.
Then, a sufficient condition for capturing the support of the
$p+q$ largest (in magnitude) entries of ${\bf x}$ in $k$
additional iterations for some integer $q$, $K-p\ge q \ge 1$ is
given by
\begin{equation}
\widetilde{x}_{p+q}>\rho_{4K}^k||\widetilde{{\bf
x}}_{\{p+1,p+2,\cdots ,K\}}||_2+\gamma||e||_2,~(\rho_{4K}<1)
\end{equation}
where $\gamma$ is a function of $\delta_{3K}$ and $\delta_{4K}$.
\end{theorem}

\begin{proof}
We need to ensure that the support of the largest (in magnitude)
$p+q$ elements of ${\bf x}$, i.e., $\{\pi(1),\cdots,\pi(p+q)\}$
gets selected in the $(n+k)^{th}$ iteration. This means,
$\pi(j),\;j\in \{ 1,\cdots ,p+q\}$ should first belong to
$U^{n+k}$, and then it also should go through the update step in
CoSaMP. Now, for $\pi(j),~j\in \{1,\cdots,p+q\}$ to belong to
$U^{n+k}$, it is sufficient to have
\begin{equation}
\widetilde{x}_{p+q}>||({\bf x})_{\overline{U^{n+k}}}||_2,\label{eq1}
\end{equation}
as this ensures that the top $p+q$ elements of $\bf x$ can not
belong to $({\bf x})_{\overline{U^{n+k}}}$ and thus, their support
is captured in $U^{n+k}$. In order that the above support passes
through the update step in CoSaMP under the satisfaction of
\eqref{eq1}, it is sufficient to have,
\begin{align}
&\min_{j\in \{\pi(1),\cdots ,\pi(p+q)\}}{|{\bf u^{n+k}}_j|}>\max_{i \in U^{n+k} \setminus S} {|{\bf u^{n+k}}_i|}.\label{CoSa4}
\end{align}
In the following, we first develop a sufficient condition (viz.
(9)) which, under the satisfaction of (4), guarantees satisfaction
of (5). Condition (3) is then obtained by deriving a sufficient
condition for \textit{simultaneous} satisfaction of (4) and (9).

Note that one can write $|{\bf u^{n+k}}_j|=|{\bf x}_j-({\bf
x}_j-{\bf u^{n+k}}_j)|\ge |{\bf x}_j|-|{\bf u^{n+k}}_j-{\bf
x}_j|.$ Using this and some basic properties of inequalities, the
LHS of (\ref{CoSa4}) can be written as,
\begin{align}
&\nonumber\min_{j\in \{\pi(1),\cdots ,\pi(p+q)\}}{|{\bf u^{n+k}}_j|}\\
&\ge
\min_{j\in \{\pi(1),\cdots ,\pi(p+q)\}}{|({\bf x})_j|
-|({\bf u^{n+k}-x})_j|}\nonumber\\
& \ge \widetilde{x}_{p+q}-\max_{j\in \{\pi(1),\cdots ,\pi(p+q)\}}|({\bf
u^{n+k}-x})_j|,
\end{align}
while the RHS of (\ref{CoSa4}) can be written as $\max_{i \in
U^{n+k} \setminus S} {|{\bf u^{n+k}}_i|}=\max_{i \in U^{n+k}
\setminus S} {|({\bf u^{n+k}}-{\bf x})_i|}$, since ${\bf x}_i=0$
for $i\in U^{n+k}
\setminus S$. Combining, in order to have (\ref{CoSa4}) satisfied,
it is then sufficient to have,
\begin{align}
&\widetilde{x}_{p+q}>\max_{j\in \{\pi(1),\cdots ,\pi(p+q)\}}|({\bf u^{n+k}}-{\bf x})_j|+\max_{i \in U^{n+k} \setminus S}{|({\bf u^{n+k}}-{\bf x})_i|}.
\end{align}
Now, under the satisfaction of \eqref{eq1}, we have $\pi(j)\in
U^{n+k},~j\in \{1,\cdots,p+q\}$. Again, $\pi(j)\in S,~j\in
\{1,\cdots,p+q\}$. Together, these mean that $\pi(j)\in
U^{n+k}\cap S,~j\in \{1,\cdots,p+q\}$. One can then write,
$$\max_{j\in \{\pi(1)\cdots ,\pi(p+q)\}}|({\bf u^{n+k}}-{\bf x})_j|\le ||({\bf u^{n+k}}-{\bf x})_{U^{n+k}\cap S}||.$$
Also,
$$\max_{i \in U^{n+k} \setminus S}{|({\bf u^{n+k}}-{\bf x})_i|}\le ||({\bf u^{n+k}}-{\bf x})_{U^{n+k} \setminus S}||.$$
Using these and the fact that for two real numbers $a,\;b$,
$a+b\le \sqrt{2}\sqrt{a^2+b^2}$, the RHS of (7) can be written as,
\begin{align}
&\nonumber\max_{j\in \{\pi(1),..,\pi(p+q)\}}|({\bf u^{n+k}}-{\bf x})_j|+\max_{i \in U^{n+k} \setminus S}{|({\bf u^{n+k}}-{\bf x})_i|}\\
&\nonumber<\sqrt{2}||({\bf u^{n+k}-x})_{U^{n+k}}||_2
\\
&\stackrel{(\ref{estimation})}{\le}\dfrac{\sqrt{2}\delta_{4K}}{\sqrt{1-\delta_{4K}^2}}||({\bf x})_{\overline{U^{n+k}}}||_2+\sqrt{2}\tau_1\norm{\textbf{e}}.
\end{align}
From (7) and (8), it then follows that it is sufficient to have,
\begin{equation}
\widetilde{x}_{p+q}>\dfrac{\sqrt{2}\delta_{4K}}{\sqrt{1-\delta_{4K}^2}}||({\bf x})_{\overline{U^{n+k}}}||_2+\sqrt{2}\tau_1\norm{\textbf{e}},\label{eq2}
\end{equation}
in order to satisfy \eqref{CoSa4} under the condition that
\eqref{eq1} holds. Now, to satisfy both (\ref{eq1}) and
(\ref{eq2}) simultaneously, a sufficient condition will be
$\widetilde{x}_{p+q}>\max ($RHS of \eqref{eq1}, RHS of
\eqref{eq2}$)$. Recalling that convergence of CoSaMP requires
$\delta_{4K}<0.5$ \citep{Song} which implies
$\dfrac{\sqrt{2}\delta_{4K}}{\sqrt{1-\delta_{4K}^2}}<1$, it will
then be enough to have $\widetilde{x}_{p+q}>||({\bf
x})_{\overline{U^{n+k}}}||_2+\sqrt{2}\tau_1\norm{\textbf{e}}$ for
simultaneous satisfaction of (\ref{eq1}) and (\ref{eq2}), where,
\begin{align}
&||({\bf x})_{\overline{U^{n+k}}}||_2+\sqrt{2}\tau_1\norm{\textbf{e}}\nonumber
\\
&\stackrel{\eqref{metric1}}{<}\rho_{4K}^{k}||({\bf x})_{\overline{U^{n}}}||_2+(\tau +\sqrt{2}\tau_1)\norm{\textbf{e}}\nonumber
\\
&\le\rho_{4K}^{k}||({\bf x})_{\overline{S^{n}}}||_2+\gamma \norm{\textbf{e}}\nonumber
\\
&\le\rho_{4K}^{k}||\widetilde{{\bf x}}_{\{p+1,\cdots ,K\}}||_2+\gamma \norm{\textbf{e}},
\end{align}
where $\gamma=\tau +\sqrt{2}\tau_1$ and the last step follows from the assumption that $S^n$ has captured
the $p$ largest (in magnitude) elements of $\bf x$.

Hence, in order that the $p+q$ largest (in magnitude) elements of $\bf x$ get selected in $k$ additional
iterations, it is sufficient to have,
\begin{equation}
\widetilde{x}_{p+q}>\rho_{4K}^{k}||\widetilde{{\bf x}}_{\{p+1,\cdots ,K\}}||_2+\gamma \norm{\textbf{e}}.
\end{equation}
Hence proved.
\end{proof}

Note that by substituting $p=0$ and $q=K$ in Theorem 1 and taking
$\norm{\textbf{e}}$ to be zero, one can obtain the minimum number
of iterations required to guarantee perfect recovery in the
noiseless case (i.e., iteration bound), which is given by
$k_{min}=\lceil\frac{{\log(\norm{\bf
x}}/\widetilde{x}_K)}{\log(1/\rho_{4K})}\rceil.$
%$k_{min}={\log\bigg{(}\dfrac{\norm{\bf x}}{{x}^*_K}\bigg{)}}{\log(1/\rho_{4K})}.$
The above, however, provides a $k_{min}$ that is dependent on the
signal structure, i.e., $\norm{\bf x}$ and $\widetilde{x}_K$. A
signal independent iteration bound can, however, be computed by
noticing the similarity between (2) and the corresponding
sufficient condition for the HTP algorithm derived in Lemma 3 of
\citep{FoucartHTP} (with the only difference being in the
expressions for $\rho_{4K}$ and $\gamma$). To calculate $k_{min}$,
one then simply has to apply the arguments used in the proof of
Theorem 5 of \citep{FoucartHTP} to the above context, which will
require successive application of Theorem 1 on certain partitions
of the index set $S$. The resulting iteration bound is given in
Theorem 2 below.
\begin{theorem}
With measurements ${\bf y}={\bf Ax}$, the CoSaMP algorithm converges to $\bf x$ in $\lceil cK \rceil$ number of iterations where
$
c=\ln{(4/\rho^2_{4K})}/\ln{(1/\rho^2_{4K})}.
$
%[As a typical example, for $\delta_{4K}<1/\sqrt{5}$, it takes
%approximately $5K$ iterations to converge.] \label{mainTh}
\end{theorem}
\begin{proof}
The proof follows directly by applying the arguments used in the
proof of Theorem 5 of \citep{FoucartHTP} to (2) and is thus
omitted.
%Given in Appendix B.
\end{proof}

Note that unlike \citep{Needell} where an iteration bound for
CoSaMP was calculated assuming $0<\delta_{4K} <0.1$, the proposed
bound is defined for $0<\delta_{4K}<0.5$, i.e., over the entire
range for which CoSaMP is guaranteed to converge. In order to have
some quantitative idea, we plot the proposed iteration bound
(after normalizing by $K$) against $\delta_{4K}$ in Fig. 1.
Clearly, in comparison to \citep{Needell} which obtained the
iteration bound as $6(K+1)$ (for $0<\delta_{4K}<0.1$), the
proposed bound is about four to six times less, which is a
significant improvement.
\begin{figure}[h]
\begin{center}
\includegraphics[width=120mm,height=80mm]{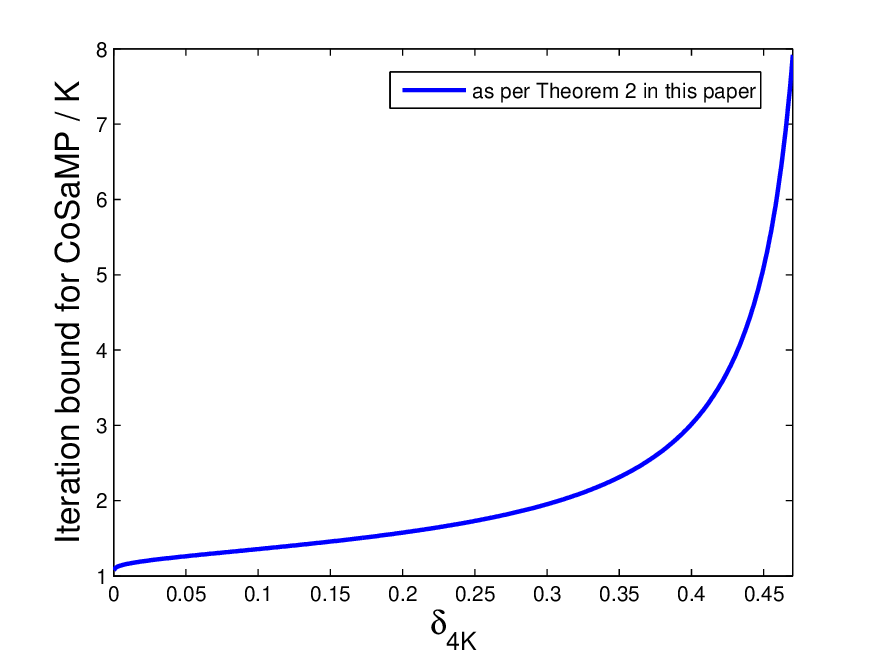}
\end{center}
\caption{Dependence of the proposed iteration bound for CoSaMP (after normalizing by $K$)
on the RIC $\delta_{4K}$} %\vspace*{-3pt}%{\hfill\footnotesize
\end{figure}
\subsection{Extension to the Subspace Pursuit Algorithm}
%\textbf{Extension to the Subspace Pursuit Algorithm :}
Like Lemma 1 for CoSaMP above, there exists a similar decay
relation for the SP algorithm as well \citep{Song}. However, due
to the presence of an additional orthogonal projection step in the
SP algorithm (viz., the second operation in the ``Update" step),
the decay relation is obtained here directly in terms of
$\norm{{\bf x}_{\overline{S^n}}}$. This makes it possible to
formulate a sufficient condition to capture the $p+q$ largest (in
magnitude) elements of $\bf x$ within a certain number of
iterations in a much simpler way than in CoSaMP and thus the
derivation becomes lot simpler.

\begin{theorem}
With measurements ${\bf y}={\bf Ax}$, the SP algorithm converges
to ${\bf x}$ in $\lceil cK \rceil$ number of iterations, where $
c=\ln{(4/\rho^2_{3K})}/\ln{(1/\rho^2_{3K})}. $
%[As a typical example, for $\delta_{3K}<1/\sqrt{5},$ it takes
%approximately $6K$ iterations to converge.]
\label{mainThSP}
\end{theorem}
\begin{proof}
In the SP algorithm, a decay relation  analogous to Lemma 1 for
CoSaMP is given by \citep{Song}
\begin{equation}
\norm{{\bf x}_{\overline{S^n}}}<\rho_{3K}\norm{{\bf
x}_{\overline{S^{n-1}}}},\label{decaySP}
\end{equation}
where
$\rho_{3K}=\frac{\sqrt{2\delta_{3K}^2(1+\delta_{3K}^2)}}{1-\delta_{3K}^2}$.
As before, we now develop conditions to ensure that the support of
the largest (in magnitude) $p+q$ elements of ${\bf x}$, i.e.,
$\{\pi(1),\cdots,\pi(p+q)\}$ get selected in $S^{n+k}$, assuming
that the support $\{\pi(1),\cdots,\pi(p)\}$ has already been
selected in $S^{n}$. A sufficient condition to ensure that the
support $\{\pi(1),\cdots,\pi(p+q)\}$ is captured in $S^{n+k}$ will
be given by,
%
%This means, $\pi(j),\;j\in \{ 1,..,p+q\}$ should first belong to
%$U^{n+k}$ At first we find a condition for ${\bf x}^*_j$ for $j
%\in \{ 1,..,p+q\}$ to belong to ${\bf x^{n+k}}$ given that $x^*_i$
%for $i\in \{ 1,..,p\}$ belongs to ${\bf x^n}$. Now for ${\bf
%x}^*_j$ for $j\in \{ 1,..,p+q\}$ to be present in ${\bf x^{n+k}}$
%it is enough to have,
\begin{equation}
\widetilde{x}_{p+q}>\norm{{\bf x}_{\overline{S^{n+k}}}}. \label{eqSP1}
\end{equation}
From \eqref{decaySP}, proceeding recursively backwards, we can
write $\norm{{\bf x}_{\overline{S^{n+k}}}}<\rho_{3K}^k\norm{{\bf
x}_{\overline{S^{n}}}}$. Again, from the assumption that the
support $\{\pi(1),\cdots,\pi(p)\}$ has already been selected in
$S^{n}$, we have, $\norm{({\bf x})_{\overline{S^{n}}}}
\le\norm{(\widetilde{{\bf x}})_{\{p+1,\cdots ,K\}}}$. From this
and \eqref{eqSP1}, a sufficient condition for ensuring that
$\{\pi(1),\cdots,\pi(p+q)\}\subseteq S^{n+k}$ given that
$\{\pi(1),\cdots,\pi(p)\}\subseteq S^{n}$ is given by,
\begin{equation}
\widetilde{x}_{p+q}>\rho_{3K}^k\norm{(\widetilde{{\bf x}})_{\{p+1,\cdots ,K\}}}.
\label{eqSP2}
\end{equation}
Since \eqref{eqSP2} has the same form as that of Lemma 3 of
\citep{FoucartHTP}, one can compute the iteration bound by
directly applying the procedure given in Theorem 5 of
\citep{FoucartHTP}. The resulting iteration bound is given by
$\lceil cK \rceil$, where
$c=\frac{\ln{(4/\rho^2_{3K})}}{\ln{(1/\rho^2_{3K})}}$.
\end{proof}

For higher and thus more practical values of $\delta_{3K}$, the
iteration bound proposed in Theorem 3 is an improvement over the
existing result $\lceil \frac{1.5K}{\ln{(1/\rho_{3K})}} \rceil$ as
given in Theorem 6 of \citep{Dai}. To show this, we plot both the
iteration bounds (after normalizing by $K$)
%in their normalized forms (i.e., after dividing
%by $K$)
 against $\delta_{3K}$ in Fig. 2 over the range
$0<\delta_{3K}<0.4859$, i.e., the range for which the SP algorithm
is guaranteed to converge, after adopting $\rho_{3K}$ from
\citep{Song}. It is seen from Fig. 2 that while for
$0<\delta_{3K}<0.28$, the proposed iteration bound is slightly
higher than that of \citep{Dai}, for $\delta_{3K}>0.28$ (which is
also a more practical range for $\delta_{3K}$ for the SP
algorithm), the former is significantly lesser than the latter and
the difference grows with $\delta_{3K}$. This shows that for more
practical ranges of $\delta_{3K}$, the SP algorithm actually
converges in significantly fewer iterations than suggested in
\citep{Dai}.

\begin{figure}[htb]

\begin{minipage}[b]{1.0\linewidth}
  \centering
  \centerline{\includegraphics[width=12cm]{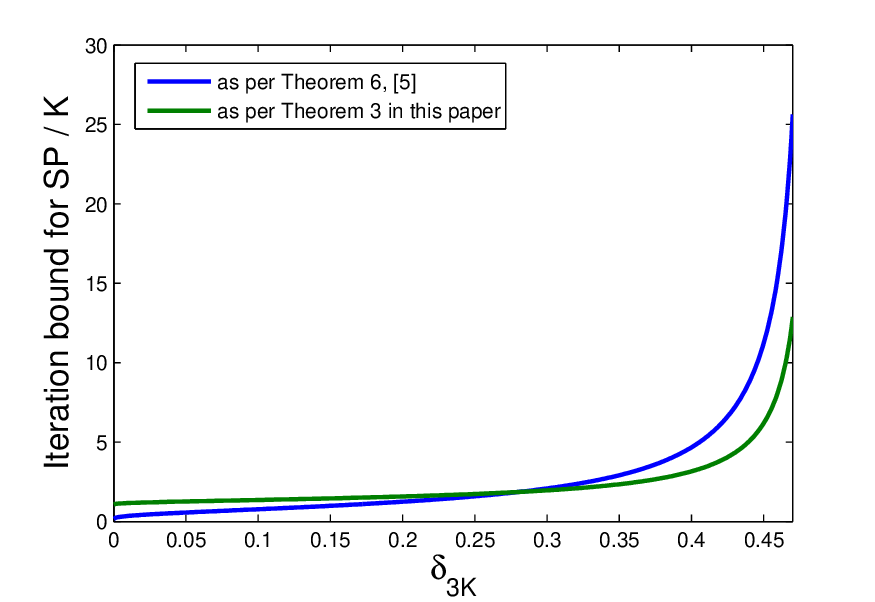}}
\end{minipage}
 \caption{Comparison of the proposed iteration
bound of the SP algorithm vis-a-vis the existing iteration bound
as given in \citep{Dai}.}
%\caption{Figure comparing iteration bound of SP algorithm as proved in \citep{Dai} with our result in Theorem \ref{mainThSP}.} \label{SP_compare}
\end{figure}
%
%\input{SP_proof}
%\section{Conclusion}
%In this letter, we have presented a bound on the number of iterations taken by the CoSaMP algorithm and SP algorithm to successfully reconstruct a $K$-sparse signal as shown in Theorem \ref{mainTh} and Theorem \ref{mainThSP} respectively. The proof is specially made easier for Subspace Pursuit and as shown in Theorem \ref{mainThSP}, it can be extended to other greedy algorithms which follow a decay property with respect to metric $\norm{({\bf x}_S)_{\overline{S^n}}}$.

%\section{Acknowledgement}
%We would like to thank Mr. Harsha Vardhana Reddy for helping us during this work.
%
\appendix

\section{Proof of Lemma 1}
As a consequence of the identification step in CoSaMP, as proved in Lemma 7, \citep{Song}, we can write,
\begin{equation}
\norm{({\bf x})_{\overline{U^n}}}\le\sqrt{2}\delta_{4K}\norm{{\bf x}-{\bf x^{n-1}}}+\sqrt{2(1+\delta_{4K})}\norm{\textbf{e}},
\label{A3}
\end{equation}
Now, we need to upper bound $\norm{{\bf x}-{\bf x^{n-1}}}$ in
terms of $\norm{({\bf x})_{\overline{U^{n-1}}}}$. For notational
convenience, we obtain the upper bound of $\norm{{\bf x}-{\bf
x^{n}}}$ in terms of $\norm{({\bf x})_{\overline{U^{n}}}}$ and
then replace $n$ with $n-1$ later. For this, the estimation step
of CoSaMP is analyzed next. For any ${\bf z}\in \mathbb{C}^{l
\times 1}$ with $supp({\bf z})\subseteq U^n$, the estimation step
in CoSaMP ensures that $\langle {\bf y} - {\bf A u^n},{\bf
Az}\rangle=0$. Substituting $\bf y$ by ${\bf A}{\bf x}+{\bf e}$,
this leads to $\langle {\bf u^n} - {\bf x} ,{\bf A}^h{\bf
Az}\rangle=\langle{\bf e} ,{\bf Az}\rangle$. Defining $V^n=U^n\cup
S$, this can also be written as,
\begin{align}
%&\langle {\bf y} - {\bf A u^n},{\bf Az}\rangle=0
%\implies
\langle {\bf u^n} - {\bf x} ,{\bf A}^h_{V^n}{\bf A}_{V^n}{\bf
z}\rangle=\langle{\bf e} ,{\bf A}{\bf z}\rangle .
\label{estimation1}
\end{align}
Taking $z=({\bf u^n}-{\bf x})_{U^n}$, one then obtains,
\begin{align}
&\nonumber||({\bf u^n}-{\bf x})_{U^n}||_2^2=\langle {\bf u^n}-
{\bf x} , ({\bf u^n}-{\bf x})_{U^n} \rangle,
\\
&\nonumber\stackrel{(\ref{estimation1})}{=}\langle {\bf u}^n-{\bf x},
(I-{\bf A}^h_{V^n}{\bf A}_{V^n})({\bf u^n}-{\bf x})_{U^n}
\rangle+\langle {\bf e} , {\bf A}({\bf u^n}-{\bf x})_{U^n}
\rangle,
\\
&\le\delta_{4K}||{\bf u^n}-{\bf x}||_2||({\bf u^n}-{\bf x})_{U^n}||_2+\sqrt{1+\delta_{3K}}||{\bf e}||_2||({\bf u^n}-{\bf x})_{U^n}||_2,\label{A0}
\end{align}
where \eqref{A0} follows from consequences of RIP as presented in Lemma 1,2 \citep{Song}.
With $A=||({\bf u^n}-{\bf x})_{U^n}||_2$, $B=||({\bf u^n}-{\bf x})_{\overline{U^n}}||_2$ and $||{\bf u^n}-{\bf x}||_2^2=A^2+B^2$, \eqref{A0} can be reduced to $A\le\delta_{4K}\sqrt{A^2+B^2}+\sqrt{1+\delta_{3K}}||{\bf e}||_2$. After solving the above quadratic equation in $A$ with appropriate inequalities, we get,
\begin{equation}
||({\bf u^n}-{\bf x})_{U^n}||_2 \le \dfrac{ \delta_{4K} }{\sqrt{1-\delta_{4K}^2}}||({\bf u^n}-{\bf x})_{\overline{U^n}}||_2+\tau_1\norm{\textbf{e}},
\label{estimation}
\end{equation}
where $\tau_1=\dfrac{\sqrt{1+\delta_{3K}}}{1-\delta_{4K}}$. Coming
to the update step, as $S^n$ is the best $K$ term approximation to
$U^n$, we can say
\begin{equation}
\norm{{\bf u^n}_{A\setminus B }}\le\norm{{\bf u^n}_{B\setminus A}}=\norm{({\bf u^n}-{\bf x})_{B\setminus A}},\label{update1}
\end{equation}
where $A=U^n\setminus S^n$ and $B=U^n\setminus S$. We also have,
\begin{align}
&\norm{{\bf u^n}_{A\setminus B}}=\norm{({\bf u^n}-{\bf x})_{A\setminus B}+({\bf x})_{A}}
\\
&\ge \norm{({\bf x})_{U^n\setminus S^n}}-\norm{({\bf u^n}-{\bf
x})_{A\setminus B}}.\label{update2}
\end{align}
From (\ref{update1}) and (\ref{update2}), we get,
\begin{equation}
\norm{({\bf x})_{U^n\setminus S^n}}\le
\sqrt{2}\norm{({\bf u^n}-{\bf x})_{A\cup B}}\le\sqrt{2}\norm{({\bf u^n}-{\bf x})_{U^n}}.\label{update}
\end{equation}
Finally, we can upper bound $\norm{{\bf x}-{\bf x^{n}}}$ as,
\begin{align}
&\nonumber\norm{{\bf x}-{\bf x^{n}}}^2=\norm{({\bf x}-{\bf x^{n}})_{S^n}}^2+\norm{({\bf x}-{\bf x^{n}})_{\overline{S^n}}}^2
\\
&\nonumber=\norm{({\bf x}-{\bf x^{n}})_{S^n}}^2+\norm{({\bf x})_{\overline{U^n}}}^2+\norm{({\bf x})_{U^n\setminus S^n}}^2
\\
&\nonumber\stackrel{(\ref{update})}{\le}\norm{({\bf x}-{\bf x^{n}})_{S^n}}^2+\norm{({\bf x})_{\overline{U^n}}}^2+2\norm{({\bf u^n}-{\bf x})_{U^n}}^2
\\
&\nonumber\le 3\norm{({\bf x}-{\bf u^{n}})_{U^n}}^2+\norm{({\bf
x})_{\overline{U^n}}}^2
\\
&\nonumber\stackrel{(\ref{estimation})}{\le}\big{(}\dfrac{\sqrt{3}\delta_{4K}}{\sqrt{1-\delta_{4K}^2}}\norm{({\bf x})_{\overline{U^n}}}+\sqrt{3}\tau_1\norm{\textbf{e}}\big{)}^2+\norm{({\bf x})_{\overline{U^n}}}^2
\\
&\le(\sqrt{\dfrac{1+2\delta_{4K}^2}{1-\delta_{4K}^2}}\norm{({\bf x})_{\overline{U^n}}}+\sqrt{3}\tau_1\norm{\textbf{e}})^2.\label{A1}
\end{align}
Using \eqref{A1} with \eqref{A3}, we then arrive at the result.

\textbf{Comment :} Lemma 1 and its proof as given above has some
important differences with its counterpart presented in
\citep{Song} (i.e., Theorem 2 of \citep{Song}). In \citep{Song}, a
similar decay relation was presented in terms of $\norm{{\bf
x}-{\bf x^n}}$. For this, $\norm{({\bf x}-{\bf x^n})_{S^n}}$ and
$\norm{({\bf x}-{\bf x^n})_{\overline{S^n}}}$ were separately
upper bounded by $\norm{{\bf x}-{\bf x^{n-1}}}$, and then the
results were combined to obtain an upper bound of $$\norm{{\bf
x}-{\bf x}^{n}}=\sqrt{\norm{({\bf x}-{\bf
x^{n}})_{S^n}}^2+\norm{({\bf x}-{\bf
x^{n}})_{\overline{S^n}}}^2}$$ in terms of $\norm{{\bf x}-{\bf
x^{n-1}}}$. In contrast, in our treatment here, we try to upper
bound $\norm{({\bf x})_{\overline{U^{n}}}}$ in terms of
$\norm{({\bf x})_{\overline{U^{n-1}}}}$, for which we first obtain
an upper bound of $\norm{{\bf x}-{\bf x^{n}}}$ in terms of
$\norm{({\bf x})_{\overline{U^{n}}}}$ (i.e., \eqref{A1}) and
subsequently combine it with \eqref{A3} (with $n$ replaced by
$(n-1)$ in \eqref{A1}). Interestingly, in doing so, we can obtain
a decay relation of the metric $\norm{{\bf x}-{\bf x^{n}}}$ by
combining \eqref{A1} with \eqref{A3}, given as
\begin{equation}
\norm{{\bf x}-{\bf x^{n}}}\le\rho_{4K}\norm{{\bf x}-{\bf
x^{n-1}}}+(1-\rho_{4K})\alpha\norm{\textbf{e}},
\label{metric_song}
\end{equation}
where $(1-\rho_{4K})\alpha =
\frac{\sqrt{3(1+\delta_{3K})}}{1-\delta_{4K}} +
\sqrt{\frac{2(1+\delta_{4K})(1+2\delta_{4K}^2)}{1-\delta_{4K}^2}}$.
The relation \eqref{metric_song} is almost identical to the decay
relation given in Theorem 2 of \citep{Song}, with the only
difference being that the coefficient $(1-\rho_{4K})\alpha$ in
\eqref{metric_song} is lesser than the corresponding coefficient
in \citep{Song} as can be verified trivially.
For example, with $\rho_{4K}=0.5$ (meaning
$\delta_{4K} \approx 0.3$), $\alpha\approx 9.4$ in
\eqref{metric_song}, whereas $\alpha\approx 13.7$ in Theorem 2 of
\citep{Song}.

\end{document}